\definecolor{proofparagraphgray}{rgb}{0.282, 0.322, 0.345}
\definecolor{linkblue}{rgb}{0.015, 0.356, 0.745}
\newtheorem{theorem}{Theorem}
\newtheorem{observation}[theorem]{Observation}
\newtheorem{lemma}[theorem]{Lemma}
\theoremstyle{definition}
\newtheorem{definition}[theorem]{Definition}
\newcommand{\defeq}{:=}
\newcommand\oeq{\makecircled{\raisebox{.75pt}{\scalebox{0.675}{=}}}}
\newcommand\ole{\makecircled{\raisebox{.75pt}{\scalebox{0.625}{$<$}}}}
\newcommand\ow{\makecircled{\raisebox{.375pt}{\scalebox{0.75}{\sf w}}}}
\newcommand\omm{\makecircled{\raisebox{.375pt}{\scalebox{0.675}{$\lor$}}}}
\newcommand\makecircled[1]
\def\itE{\tikz[baseline=(X.base), xslant=0.25]\node[inner sep=0pt, xslant=0.25)](X){$\exists$};\!}
\newcommand{\hl}{\bgroup\markoverwith{\begingroup\color{gray}\raisebox{-.75ex}{.}\endgroup}\ULon}
\title{Non-Boolean OMv: One More Reason to Believe Lower Bounds for Dynamic Problems}
\author{Bingbing Hu \\ \small{UC San Diego} \and Adam Polak \\ \small{Bocconi University}}
\date{}
\newcommand{\proofsubparagraph}[1]{\vspace{-.5em}\paragraph{\color{proofparagraphgray}#1}}
\begin{document}

\maketitle

\begin{abstract}
\noindent Most of the known tight lower bounds for dynamic problems are based on the Online Boolean Matrix-Vector Multiplication (OMv) Hypothesis, which is not as well studied and understood as some more popular hypotheses in fine-grained complexity. It would be desirable to base hardness of dynamic problems on a more believable hypothesis. We propose analogues of the OMv Hypothesis for variants of matrix multiplication that are known to be harder than Boolean product in the offline setting, namely: equality, dominance, min-witness, min-max, and bounded monotone min-plus products. These hypotheses are a priori weaker assumptions than the standard (Boolean) OMv Hypothesis and yet we show that they are actually equivalent to it. This establishes the first such fine-grained equivalence class for dynamic problems.
\end{abstract}

\section{Introduction}

The job of a dynamic algorithm is to keep its output up to date whenever its input undergoes a local change, e.g., maintaining a shortest $s$--$t$ path in a~graph while it undergoes vertex deletions. Ideally each such update should take at most a polylogarithmic time, but at the very least it should be faster than it takes to recompute a solution from scratch. Despite great progress in the field, for many dynamic problems that goal is beyond the reach of current algorithmic techniques. Starting from the seminal paper by Pătraşcu~\cite{Patrascu10}, we often get to explain this hardness by fine-grained conditional lower bounds.

Most of the known \emph{tight} lower bounds for dynamic problems are based on the OMv Hypothesis~\cite{HenzingerKNS15}. This hypothesis is not as widely studied and as well understood as some other hypotheses in fine-grained complexity, such as SETH, 3SUM Hypothesis, and APSP Hypothesis (see, e.g., \cite{Vw18}). It would be more desirable to base hardness of dynamic problems on these more popular (and hence also more believable) assumptions. Unfortunately, the existing lower bounds conditional on them are often not tight for dynamic problems. It seems likely that these hypotheses are not strong enough to explain the complexity of many dynamic problems. We may need to search for a different approach to the following glaring question:
\begin{center}
\emph{Can we have tight lower bounds for dynamic problems\\ based on a hypothesis that is more believable than OMv?}
\end{center}

Recall that the OMv Hypothesis is about \emph{Boolean product}; it asserts that computing the Boolean product of two $n \times n$ matrices requires cubic $n^{3-o(1)}$ time if the second matrix is given column by column in an online fashion. In the static (i.e., non-online) setting, Boolean product is arguably the easiest of the many studied variants of matrix multiplication. Indeed, it can be computed in time $O(n^{\omega})$, where $\omega < 2.372$~\cite{AlmanDWXXZ25} is the (integer) matrix multiplication exponent.\footnote{Moreover, the fastest known ``combinatorial'' algorithm for the Boolean product~\cite{AbboudFKLM24} does not give a similar improvement for the integer product.}

In the static matrix product world, if the $O(n^\omega)$ running time is on the ``fast'' end of the spectrum, then the min-plus product (related to distance computations in graphs) marks the other end: the fastest known algorithm shaves only a subpolynomial factor over the naive cubic running time~\cite{Williams18}, and the APSP Hypothesis from fine-grained complexity essentially says that no $n^{3-o(1)}$-time algorithm is possible~\cite{WilliamsW18}.

There are also numerous variants of matrix multiplication that seem to have an ``intermediate'' hardness on this spectrum. Examples include
min-max product~\cite{VassilevskaWY09,DuanP09},
min-witness product~\cite{CzumajKL07},
equality product (a.k.a.~Hamming product~\cite{MinKZ09}),
dominance product~\cite{Matousek91},
threshold product~\cite{IndykLLP04},
plus-max product~\cite{Vassilevska08},
$\ell_{2p+1}$ product~\cite{LabibUW19},
and many others. The fastest known algorithms for these problems have running times that are functions of the matrix multiplication exponent $\omega$, and they converge to $O(n^{2.5})$ when $\omega=2$. Although it is still an open problem whether this is necessarily the right complexity for all these problems, there are some partial results in the form of tight fine-grained reductions that suggest it might be the case~\cite{LabibUW19, Lincoln0W20, WilliamsX20a}.

\subsection{Our contributions}

The OMv Hypothesis (and the lower bounds it implies) would be a priori more believable if we could replace in its statement the Boolean product with some other product known to be harder in the static world. For instance, we can define in a similar way the Min-Max-OMv problem using the min-max product: Pre-process a matrix $M \in \mathbb{Z}^{n \times n}$, and then answer (one by one, in an online fashion) $n$ queries, each of them asking, given a~vector $v \in \mathbb{Z}^n$, to compute the min-max product of $M$ and $v$, i.e., the vector $u \in \mathbb{Z}^n$ such that
\[u[i] \defeq \min_{k \in [n]}\max\{M[i,k], v[k]\}.\]
Then we can state a corresponding hypothesis, let us call it the Min-Max-OMv Hypothesis, asserting that the Min-Max-OMv problem cannot be solved in truly subcubic time $O(n^{3-\varepsilon})$, for any $\varepsilon > 0$. This of course brings a question:

\begin{center}
\emph{Can we still give tight reductions from Min-Max-OMv to those dynamic problems for which there are known reductions from (Boolean-)OMv?}
\end{center}

It turns out, yes, we can! Somewhat surprisingly\footnote{This is perhaps less surprising to those readers who are familiar with how the subcubic algebraic algorithms for the intermediate problems work; see Section~\ref{sec:techoverview}.
}, we can even give a~tight reduction from Min-Max-OMv to Boolean-OMv. This shows that the Min-Max-OMv Hypothesis and the standard (Boolean-)OMv Hypothesis are actually equivalent. Moreover, the min-max product is not a unique example of this phenomenon. We show more equivalent hypotheses based on several matrix products, which are harder than the Boolean product in the static setting (see Section~\ref{sec:preliminaries} for the formal definitions).

\begin{theorem}
\label{thm:main}
    The following problems either all have truly subcubic algorithms or none of them do:
    \begin{itemize}[nosep,left=0pt]
      \item Boolean-OMv;
        \hfill {\color{gray}$(\exists_{k}\,M[i,k] \land v[k])$}
      \item \itE{}Equality-OMv;
        \hfill {\color{gray}$(\exists_{k}\,M[i,k] = v[k])$}
      \item \itE{}Dominance-OMv; 
        \hfill {\color{gray}$(\exists_{k}\,M[i,k] \leqslant v[k])$}
      \item Min-Witness-OMv;
        \hfill {\color{gray}$(\min\,\{k \mid M[i,k] \land v[k]\})$}
      \item Min-Max-OMv;
        \hfill {\color{gray}$(\min_k \max\{M[i,k], v[k]\})$}
      \item Bounded Monotone Min-Plus-OMv.
        \hfill {\color{gray}$(\min_k M[i,k] + v[k])$}
    \end{itemize}
For the Bounded Monotone Min-Plus-OMv problem the implied algorithm is randomized.
\end{theorem}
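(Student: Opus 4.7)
The plan is to establish the six equivalences via a web of reductions. The substantive work is the reverse direction: each of the five richer problems reduces to Boolean-OMv with $n^{o(1)}$ overhead.

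The easy direction (Boolean-OMv $\leq X$-OMv) is handled by simple encodings that realize the Boolean AND through the richer operation. For $\exists$Equality-OMv, for instance, I would set $M'[i,k]\defeq M[i,k]\cdot N + k$ when $M[i,k]=1$ and an unrelated unique ``bad'' value otherwise, with $v'$ defined symmetrically using a \emph{different} bad range so that the $(0,0)$ case never spuriously matches; for $N$ sufficiently large $M'[i,k]=v'[k]$ iff $M[i,k]\land v[k]$. Analogous $O(1)$-overhead encodings work for dominance, min-witness, min-max, and bounded monotone min-plus.

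For the hard direction, the unifying scheme is a \emph{bucket-and-brute-force} reduction. Partition the natural ``answer space'' (a rank space or a value space, depending on the problem) into $n^\alpha$ buckets of size $n^{1-\alpha}$. For each bucket, issue one Boolean-OMv (sometimes through an intermediate $\exists$Dominance-OMv, itself reduced to Boolean-OMv first) call on a suitably masked matrix or vector to decide, for each row $i$, which bucket contains $i$'s answer. Then for each $i$, brute-force within the identified bucket in $O(n^{1-\alpha})$ time. Assuming Boolean-OMv runs in total time $n^{3-\varepsilon}$, the total cost is $n^\alpha\cdot n^{3-\varepsilon} + n^{3-\alpha}$, subcubic for any $0<\alpha<\varepsilon$ (optimally $\alpha=\varepsilon/2$, giving $O(n^{3-\varepsilon/2})$). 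The per-problem instantiations: for $\exists$Equality-OMv and $\exists$Dominance-OMv, sort each column of $M$ at preprocessing so the matching (or dominating) rows in column $k$ form a rank interval (or prefix), and bucket by rank; the two partial buckets per column are resolved by the brute-force step. For Min-Witness-OMv, bucket columns into $n^\alpha$ contiguous blocks. For Min-Max-OMv, write
\[
u[i] = \min\bigl(\min_{k\,:\,v[k]\geq M[i,k]} v[k],\; \min_{k\,:\,M[i,k]>v[k]} M[i,k]\bigr),
\]
bucket each half by the relevant rank (query-time $v$-rank for the first, preprocessed within-row $M$-rank for the second), and invoke $\exists$Dominance-OMv per bucket. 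For Bounded Monotone Min-Plus-OMv, bucket the range of $u[i]$; at boundary $c$, ``$u[i]\leq c$'' is an $\exists$Dominance-OMv query with vector $c-v$, and the within-bucket refinement exploits the row-monotonicity of $M$ (which makes the candidate set $\{k:M[i,k]\leq c\}$ a prefix) to keep the per-row work at $O(n^{1-\alpha})$.

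The main obstacle will be Min-Max-OMv and Bounded Monotone Min-Plus-OMv, whose outputs are integers rather than Booleans: a naive parallel binary search would emit $\Omega(n)$ distinct midpoints per query and overshoot the budget. The decomposition and value-bucketing tricks above sidestep this by ensuring each bucket-identification query uses a \emph{single global} threshold per bucket rather than per-row midpoints. A second point of care is composition: Min-Max $\to$ $\exists$Dominance $\to$ Boolean (and similarly for BMMP) stacks two bucketing layers, so the overheads multiply. Taking all bucket exponents smaller than $\varepsilon$ (say $\alpha = \varepsilon/4$ at each level of a constant-depth composition) keeps the cumulative overhead $n^{o(1)}$, preserving subcubic-ness. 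Verifying the boundary bookkeeping and that the composed reductions lose only a constant number of $\alpha$'s is the bulk of the technical effort.
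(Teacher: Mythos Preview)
Your bucket-and-brute-force scheme is essentially right for $\exists$Equality, $\exists$Dominance, Min-Witness, and Min-Max; it differs in presentation from the paper (which uses a heavy/light split for $\exists$Equality and a bit-trick for $\exists$Dominance) but arrives at the same $O(n^{3-\varepsilon/2})$ bound and the same two-sided decomposition for Min-Max.

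The genuine gap is Bounded Monotone Min-Plus. Your step 2 is fine: $n^{\alpha}$ dominance queries with vectors $c-v$ locate, for every $i$, a length-$n^{1-\alpha}$ interval containing $u[i]$. But step 3 does not follow. Knowing $u[i]\in(c_{\ell-1},c_\ell]$ tells you that \emph{every} $k$ satisfies $M[i,k]+v[k]>c_{\ell-1}$ and \emph{some} $k$ satisfies $M[i,k]+v[k]\le c_\ell$; the set of such $k$ can still have size $\Theta(n)$, and row-monotonicity of $M$ does not shrink it (the fact that $\{k:M[i,k]\le c\}$ is a prefix is about $M$ alone, whereas what you need is control over $M[i,k]+v[k]$, and $v$ is arbitrary). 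So you cannot brute-force the bucket in $O(n^{1-\alpha})$ per row without a further idea. A second, separate issue is that the problem (Definition~\ref{def:minplus}) promises monotonicity in \emph{any one} of four directions---rows of $M$, columns of $M$, within each $v_j$, or across $j$---and your sketch addresses only the row case.

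The paper's reduction for this problem is substantially different and does not go through output-value bucketing. It scales by $\Delta=n^{\varepsilon/3}$, setting $\widehat{M}=\lfloor M/\Delta\rfloor$, $\widehat{v}=\lfloor v/\Delta\rfloor$, and defines candidate sets $C_i=\{k:\widehat{M}[i,k]+\widehat{v}[k]\in\{\widehat{u}[i],\widehat{u}[i]+1\}\}$. The crucial point is that $\widehat{M}$ (or $\widehat{v}$) has \emph{values} bounded by $n/\Delta$, so the monotone direction has only $O(n/\Delta)$ jumps, which lets one compute $\widehat{u}$ and enumerate small $C_i$'s in $O(n^2\log n/\Delta)$ per query via BST/RMQ tricks, handled case-by-case for all four monotonicity directions (Lemma~\ref{lem:listCi}). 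Large $C_i$'s are handled by a \emph{randomized hitting set}: sample $O(\Delta\log n)$ columns $r$, and for each $r$ and each offset $\delta\in\{0,\dots,3\Delta-2\}$ issue an $\exists$Equality query on the matrix $M^{(r)}[i,k]=M[i,k]-M[i,r]$ against $-(v-v[r]+\delta)$; if $r\in C_i$ then one of these hits the true minimizer. This is the missing idea your outline would need.
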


\pagebreak  

This conglomeration of equivalent problems can be interpreted as making the OMv Hypothesis more believable, and the conditional lower bounds based on it stronger. We recall two analogous conglomerations: the NP-complete problems and the problems equivalent to the All-Pairs Shortest Paths (APSP) problem under subcubic reductions. One of the reasons behind the great success of the theory of NP-completeness is its structural simplicity: many natural problems are NP-complete, and solving any of them efficiently would solve all of them efficiently, so they are all hard for the same underlying reason. For all the multi-faceted NP-complete problems, researchers from different areas have not managed to find a single efficient algorithm, so it seems very plausible that no such algorithm exists. The fine-grained complexity theory at large does not enjoy a similar simplicity: there are multiple independent hardness assumptions, and the reductions often go only in one way, establishing \emph{hardness} but not \emph{equivalence}. A notable exception\footnote{Another such exception is a class of vector problems equivalent to the Orthogonal Vectors problem~\cite{ChenW19}.} is the APSP problem, which is conjectured to require cubic time and there are many other problems equivalent to it via subcubic reductions~\cite{WilliamsW18}. No truly subcubic algorithms have been found so far for any of these problems, which strengthens the APSP Hypothesis. Our Theorem~\ref{thm:main} establishes another such class of problems equivalent under fine-grained subcubic reductions.

\paragraph*{Subcubic algorithms in the cell-probe model.} We remark that the above equivalences also hold in the cell-probe model, where the running time is measured solely by the number of memory cell probes, and computation is free of charge. To be more specific, from our proofs it follows that each of the above problems can be solved by solving $t^{O(1)}$ instances of any other of those problems and performing some additional computation that runs in time $O(n^3/t)$, in the word RAM model, where $t$ is a parameter to be chosen. Since any algorithm that runs sequentially in time $O(n^{3-\varepsilon})$ in the word RAM model can access only a truly subcubic number of memory cells, the subcubic equivalence in the cell-probe model follows. Larsen and Williams~\cite{LarsenW17} showed that the OMv Hypothesis is actually \emph{false} in the cell-probe model: Every query can be computed with $O(n^{7/4}w^{-1/2})$ probes, where $w$ is the word size. This means that all the other problems listed above have truly subcubic algorithms in the cell-probe model as well.

\subsection{Technical overview}
\label{sec:techoverview}

We prove Theorem~\ref{thm:main} by a series of fine-grained reductions, depicted in Figure~\ref{fig:reductions}. The reductions are inspired by known subcubic algebraic algorithms for the corresponding (static) matrix product problems. Such algorithms have running times that are functions of the (integer) matrix multiplication exponent~$\omega$. For example, the fastest known (static) min-max product algorithm~\cite{DuanP09} runs in time $O(n^{(3+\omega)/2 + o(1)})$, which is subcubic for any value of $\omega < 3$. In other words, this algorithm is an implicit subcubic reduction from the (static) min-max product to the (static, integer) matrix multiplication problem. The same is true about all the other intermediate problems we consider in this paper. Our technical contribution is adapting these implicit reductions so that (1) they work in the online setting, and (2) they require only Boolean (and not integer) product.

\paragraph{Detour into static combinatorial algorithms.}
Before we say a few words about each of the reductions in the paper, let us take a detour from online problems so that we can explicitly state the observation that we already hinted at in the previous paragraph.

\begin{observation}
The static variants of all the matrix products listed in Theorem~\ref{thm:main} are equivalent to each other under subcubic reductions, and the reductions are ``combinatorial''. That is, if any of the static problems admits a subcubic ``combinatorial'' algorithm, then all of them do.
\end{observation}

Why is this the case? In one direction, all those non-Boolean products can encode the Boolean product as a special case. In the other direction, all those non-Boolean products admit strongly subcubic algorithms with running times of the form $O(n^{f(\omega)})$ for a function $f$ that happens to satisfy $f(x) < 3$ for every $x < 3$. Therefore, if we could replace in such an algorithm every use of fast algebraic (integer) matrix multiplication with a hypothesized $O(n^{3-\varepsilon})$-time ``combinatorial'' Boolean matrix multiplication algorithm, we would get a ``combinatorial'' algorithm for the corresponding non-Boolean product with the running time exponent $f(3-\varepsilon) < 3$. Such a replacement is possible whenever the non-Boolean algorithm does not use (or can be modified not to use) the full counting power of integer matrix multiplication but only checks which output entries are nonzero. This is explicitly the case for some of the non-Boolean products we consider (e.g., the min-witness product~\cite{CzumajKL07}). For some others, e.g., the bounded monotone min-plus product~\cite{WilliamsX20b,Gu0WX21}, such a modification is nontrivial, but still possible.

While this observation is simple and in hindsight not really surprising, we are not aware of it being folklore in the field of fine-grained complexity, so we state it for completeness.

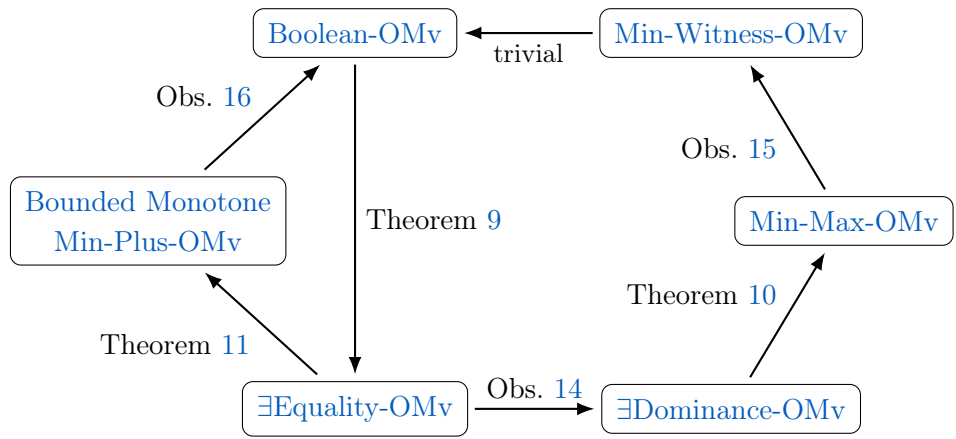
\begin{figure}
\centering
\begin{tikzpicture}[problem/.style={draw,rounded corners,inner sep=.5em,outer sep=.25em}, every edge/.style={draw,thick,-Latex}]

\node [problem] (bool) at (0,5) {\hyperref[def:bool]{Boolean-OMv}};
\node [problem] (eq)   at (0,0) {\hyperref[def:eq]{$\exists$Equality-OMv}};
\node [problem] (dom)  at (5,0) {\hyperref[def:dom]{$\exists$Dominance-OMv}};
\node [problem] (mwit) at (5,5) {\hyperref[def:wit]{Min-Witness-OMv}};
\node [problem] (mmax) at (6.5,2.5) {\hyperref[def:minmax]{Min-Max-OMv}};
\node [problem] (bmmp) at (-2.75,2.5) [align=center] {\hyperref[def:minplus]{Bounded Monotone} \\ \hyperref[def:minplus]{Min-Plus-OMv}};

\draw (bool) edge["Theorem~\ref{thm:eq}"] (eq);
\draw (eq)   edge["Obs.~\ref{obs:dom}"] (dom);
\draw (dom)  edge["Theorem~\ref{thm:minmax}"] (mmax);
\draw (mmax) edge["Obs.~\ref{obs:wit}"] (mwit);
\draw (mwit) edge["\small trivial"] (bool);
\draw (eq)   edge["Theorem~\ref{thm:minplus}"] (bmmp);
\draw (bmmp) edge["Obs.~\ref{obs:bool}"] (bool);
\end{tikzpicture}
\caption{Fine-grained reductions that together prove Theorem~\ref{thm:main}. An arrow from problem A to problem B means that a subcubic algorithm for A implies a subcubic algorithm for B.}
\label{fig:reductions}
\end{figure}

\paragraph{Subcubic reductions and where to find them.}
In Section~\ref{sec:eq} we show a subcubic reduction from $\exists$Equality-OMv to Boolean-OMv (Theorem~\ref{thm:eq}), which can be seen as an adaptation to the online setting of the sparse matrix multiplication algorithm of Yuster and Zwick~\cite{YusterZ05}. Specifically, our algorithm for $\exists$Equality-OMv handles the $n^{\varepsilon/2}$ most frequently appearing values with the help of $n^{\varepsilon/2}$ Boolean-OMv queries (about indicator vectors of those values). The remaining less frequent values can then be handled with a brute-force approach within the target running time bound.

In Section~\ref{sec:minmax} we show how a subcubic algorithm for  $\exists$Dominance-OMv would yield a subcubic algorithm for Min-Max-OMv (Theorem~\ref{thm:minmax}). The proof is inspired by the known (static) min-max product algorithms~\cite{VassilevskaWY09,DuanP09}, but it is at the same time simpler, because we do not have to optimize the dependence on $\varepsilon$ in the running time of the resulting algorithm. In short, the algorithm splits each row of $M$ into $n^{\varepsilon/2}$ buckets based on the values of the entries, and makes $n^{\varepsilon/2}$ many $\exists$Dominance-OMv queries in order to establish for each output entry to which bucket of the corresponding row it belongs; then, the algorithm exhaustively searches through the entire bucket.

In Section~\ref{sec:minplus} we show a reduction from Bounded Monotone Min-Plus-OMv to $\exists$Equality-OMv. On a high level it follows some of the previous (static) algorithms for the bounded monotone min-plus product~\cite{WilliamsX20b,Gu0WX21}. However, it also gives a fresh perspective on the problem, because those previously known algorithms use a generalization of the min-witness and bounded (non-monotone) min-plus products (see~\cite[Theorem~1.2]{WilliamsX20b}), while ours deviates from this approach by using the equality product.

In Section~\ref{sec:remaining} we show the remaining reductions (Observations~\ref{obs:dom}, \ref{obs:wit}, \ref{obs:bool}). Each of them is either very simple or follows easily from folklore arguments.

\subsection{Related work}

Bringmann et al.~\cite{BringmannGKL24} take a different approach at strengthening the OMv Hypothesis. They propose a hypothesis about the complexity of determining if a (nondeterministic) finite automaton accepts a word, and show that this hypothesis implies the OMv Hypothesis. While their new hypothesis is not as well supported as the three main fine-grained complexity hypotheses, it is remarkable that it is a statement about a \emph{static} problem implying a tight lower bound for an online problem.

In a very recent work, Liu~\cite{Liu24} shows that OMv is equivalent to the online problem of maintaining a $(1+\epsilon)$-approximate vertex cover in a fully dynamic bipartite graph.

To the best of our knowledge, the only other work that considers a variant of OMv for a non-Boolean product is by Chen et al.~\cite{ChenDGWXY18}. They use an assumption that the Min-Plus-OMv requires cubic time in order to separate partially retroactive from fully retroactive data structures. We note that this assumption seems too strong to be equivalent to the OMv Hypothesis. In particular, any ``too simple'' reduction from Min-Plus-OMv to Boolean-OMv would morally translate to a subcubic algorithm for the (static) Min-Plus Product problem, refuting the APSP Hypothesis.

\subsection{Open problems}

In this paper we manage to reduce to Boolean-OMv from OMv variants that \emph{do not involve counting}. We leave it open whether a subcubic algorithm for Boolean-OMv would imply subcubic OMv algorithms for, e.g., the counting variants of the equality and dominance products (i.e., $u[i] := \#\{k \mid M[i,k] = v[k]\}$, and $u[i] := \#\{k \mid M[i,k] \leqslant v[k]\}$, respectively), or at least for the standard integer product ($u[i] := \sum_k M[i,k] \cdot v[k]$). 

These open problems relate to the general quest for fine-grained \emph{counting-to-decision} reductions. Chan, Vassilevska Williams, and Xu~\cite{ChanWX23} gave such reductions for the Min-Plus Product, Exact Triangle, and 3SUM problems. Somewhat ironically, their reductions crucially rely on a fast algebraic algorithm for (static) integer matrix multiplication, so it seems unlikely that their techniques could be used to resolve the above open problems, which are about online problems.

\section{Preliminaries}
\label{sec:preliminaries}

\subsection{Notation}

We use $[n] := \{1, 2, \ldots, n\}$.

\subsection{Problems}

In this section we formally define all the problems that appear in Theorem~\ref{thm:main}. Since the definitions are similar to each other, we \hl{underline} the differences between them.

\begin{definition}[\hl{Boolean}-OMv]\label{def:bool}
We are first given for preprocessing a \hl{Boolean} matrix $M \in \hl{\{0, 1\}^{n \times n}}$, and then we need to answer $n$ queries: In the $j$-th query, we are given a column vector $v_j \in \hl{\{0, 1\}^n}$, and we have to compute the \hl{Boolean product $Mv_j \in \{0,1\}^n$} defined by
\[\hl{(Mv_j)[i] \defeq \begin{cases} 1, & \text{if}\ \exists_{k \in [n]} \, M[i,k]=1 \land v_j[k]=1, \\ 0, & \text{otherwise}.\end{cases}}\]
We need to answer queries one by one in an online fashion, i.e., we have to output $Mv_j$ before we can receive $v_{j+1}$.
\end{definition}

\begin{definition}[\hl{$\exists$Equality}-OMv]\label{def:eq}
We are first given for preprocessing an \hl{integer} matrix $M \in \hl{\mathbb{Z}^{n \times n}}$, and then we need to answer $n$ queries: In the $j$-th query, we are given a column vector $v_j \in \hl{\mathbb{Z}^n}$, and we have to compute the \hl{$\exists$equality product $M \oeq v_j \in \{0,1\}^n$} defined by
\[\hl{(M \oeq v_j)[i] \defeq \begin{cases} 1, & \text{if}\ \exists_{k \in [n]} \, M[i,k] = v_j[k], \\ 0, & \text{otherwise}.\end{cases}}\]
We need to answer queries one by one in an online fashion, i.e., we have to output $M \oeq v_j$ before we receive $v_{j+1}$.
\end{definition}

\begin{definition}[\hl{$\exists$Dominance}-OMv]\label{def:dom}
We are first given for preprocessing an \hl{integer} matrix $M \in \hl{\mathbb{Z}^{n \times n}}$, and then we need to answer $n$ queries: In the $j$-th query, we are given a column vector $v_j \in \hl{\mathbb{Z}^n}$, and we have to compute the \hl{$\exists$dominance product $M \ole v_j \in \{0,1\}^n$} defined by
\[\hl{(M \ole v_j)[i] \defeq \begin{cases} 1, & \text{if}\ \exists_{k \in [n]} \, M[i,k] \leqslant v_j[k], \\ 0, & \text{otherwise}.\end{cases}}\]
We need to answer queries one by one in an online fashion, i.e., we have to output $M \ole v_j$ before we receive $v_{j+1}$.
\end{definition}

\begin{definition}[\hl{Min-Witness}-OMv]\label{def:wit}
We are first given for preprocessing a~\hl{Boolean} matrix $M \in \hl{\{0, 1\}^{n \times n}}$, and then we need to answer $n$ queries: In the $j$-th query, we are given a column vector $v_j \in \hl{\{0, 1\}^n}$, and we have to compute the \hl{min-witness product $M \ow v_j \in ([n]\cup\{\infty\})^n$} defined by
\[\hl{(M \ow v_j)[i] \defeq \min (\{k \in [n] \mid M[i,k]=1 \land v_j[k]=1\} \cup \{\infty\}).}\]
We need to answer queries one by one in an online fashion, i.e., we have to output $M \ow v_j$ before we can receive $v_{j+1}$.
\end{definition}

\begin{definition}[\hl{Min-Max}-OMv]\label{def:minmax}
We are first given for preprocessing an \hl{integer} matrix $M \in \hl{\mathbb{Z}^{n \times n}}$, and then we need to answer $n$ queries: In the $j$-th query, we are given a column vector $v_j \in \hl{\mathbb{Z}^n}$, and we have to compute the \hl{min-max product $M \omm v_j \in \mathbb{Z}^n$} defined by
\[\hl{(M \omm v_j)[i] \defeq \min_{k \in [n]}\max\{M[i,k], v_j[k]\}.}\]
We need to answer queries one by one in an online fashion, i.e., we have to output $M \omm v_j$ before we receive $v_{j+1}$.
\end{definition}

\begin{definition}[\hl{Bounded Monotone Min-Plus}-OMv]\label{def:minplus}
We are first given for preprocessing an \hl{integer} matrix $M \in \hl{[n]^{n \times n}}$, and then we need to answer $n$~queries: In the $j$-th query, we are given a column vector $v_j \in \hl{[n]^n}$, and we have to compute the \hl{min-plus product $M \oplus v_j \in \mathbb{Z}^n$} defined by
\[\hl{(M \oplus v_j)[i] \defeq \min_{k \in [n]} (M[i,k] + v_j[k]).}\]
We need to answer queries one by one in an online fashion, i.e., we have to output $M \oplus v_j$ before we receive $v_{j+1}$.
\hl{We are guaranteed that at least one of the following conditions holds:}
\begin{itemize}[nosep,left=0pt]
    \item \hl{each row of $M$ is nondecreasing, i.e., $M[i,k] \leqslant M[i,k+1]$;}
    \item \hl{each column of $M$ is nondecreasing, i.e., $M[i,k] \leqslant M[i+1,k]$;}
    \item \hl{each $v_j$ is nondecreasing, i.e., $v_j[k] \leqslant v_j[k+1]$;}
    \item \hl{for every $k$, $v_j[k]$ is a nondecreasing function of $j$, i.e., $v_j[k] \leqslant v_{j+1}[k]$;}
\end{itemize}
\hl{or it holds with ``nondecreasing'' replaced by ``nonincreasing''.}
\end{definition}

\subsection{Hypotheses} Each of the problems defined above admits a naive cubic time algorithm, and for each of them we can conjecture that it is optimal up to subpolynomial factors.

\begin{definition}[*-OMv Hypotheses]\label{def:hypotheses}
For $x \in \{$Boolean, $\exists$Equality, $\exists$Dominance, Min-Witness, Min-Max, Bounded Monotone Min-Plus$\}$, the $x$-OMv Hypothesis is the statement that there is no algorithm for the $x$-OMv problem running in time $O(n^{3-\varepsilon})$, for any $\varepsilon > 0$.
\end{definition}
In other words, Theorem~\ref{thm:main} says that all the hypotheses stated in Definition~\ref{def:hypotheses} are equivalent.

\section{Reduction from \texorpdfstring{$\boldsymbol{\exists}$}{∃}Equality-OMv to Boolean-OMv}
\label{sec:eq}

In this section we give a subcubic reduction that can be seen as an easy adaptation to the online setting of the sparse matrix multiplication algorithm of Yuster and Zwick~\cite{YusterZ05}.

\begin{theorem}\label{thm:eq}
If Boolean-OMv can be solved in time $O(n^{3-\varepsilon})$, for some $\varepsilon > 0$, then \itE{}Equality-OMv can be solved in time $O(n^{3-(\varepsilon/2)})$.
\end{theorem}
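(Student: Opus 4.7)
The plan is to reduce $\exists$Equality-OMv to a batch of Boolean-OMv instances via a Yuster--Zwick style sparsity argument. Observe that $(M \oeq v)[i] = 1$ iff some $k$ satisfies $M[i,k] = v[k]$. For each pair $(k,x) \in [n] \times \mathbb{Z}$, let $R_{k,x} = \{i \in [n] : M[i,k] = x\}$; during preprocessing I enumerate all non-empty such sets (their sizes sum to $n^2$) and store them in a hash table keyed by $(k,x)$.

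Fix a threshold $t$ to be chosen later, and call a pair $(k,x)$ \emph{heavy} if $|R_{k,x}| > t$ and \emph{light} otherwise. There are at most $n^2/t$ heavy pairs. I collect them as columns of an $n \times (n^2/t)$ Boolean matrix $H$ with $H[i,(k,x)] = 1$ iff $i \in R_{k,x}$, split $H$ into $\lceil n/t \rceil$ consecutive column blocks of width $n$, and preprocess each block as an independent Boolean-OMv instance. On query $v_j$, I form the indicator $b_j$ over heavy pairs given by $b_j[(k,x)] = 1$ iff $v_j[k] = x$, route the corresponding $n$-bit slice of $b_j$ into each Boolean-OMv instance, and OR the returned vectors to get the heavy contribution. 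Over all $n$ queries this costs $\lceil n/t \rceil \cdot O(n^{3-\varepsilon}) = O(n^{4-\varepsilon}/t)$.

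For the light contribution I iterate directly: on query $v_j$, for every $k \in [n]$, look up $R_{k,v_j[k]}$ in the hash table and, if the pair is light, enumerate its at most $t$ members and set those output bits to~$1$. This costs $O(nt)$ per query and $O(n^2 t)$ in total. Balancing $O(n^{4-\varepsilon}/t + n^2 t)$ by choosing $t = n^{1-\varepsilon/2}$ yields overall time $O(n^{3-\varepsilon/2})$. The main conceptual step is the sparsification itself---recognizing that $\exists$Equality-OMv is a Boolean matrix--vector product whose matrix has only $n^2$ ones (one per entry of $M$). The remaining obstacle is mostly bookkeeping: making sure that each of the $\lceil n/t \rceil$ Boolean-OMv instances honors the online interface (it does, since I query all of them while processing a single $v_j$ and emit $u_j$ before $v_{j+1}$ arrives), and that the preprocessing of the auxiliary instances stays within the polynomial preprocessing budget allowed by the Boolean-OMv hypothesis.
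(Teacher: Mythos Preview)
Your proof is correct and essentially identical to the paper's: both split on value frequency within columns (your threshold $t = n^{1-\varepsilon/2}$ is the paper's $n/t$), handle frequent values via $n^{\varepsilon/2}$ Boolean-OMv instances, and enumerate infrequent values directly. The only cosmetic difference is how the heavy part is packed into $n\times n$ matrices---the paper puts the $\ell$-th most frequent value of each column into the $\ell$-th Boolean matrix, while you gather all heavy $(k,x)$ pairs into one wide matrix and slice it into $n$-column blocks.
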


\begin{proof}
Recall that $M$ denotes the input matrix given for preprocessing in the $\exists$Equality-OMv problem. Let $t := \lceil n^{\varepsilon / 2} \rceil$ be a parameter. For every $k \in [n]$ and every $\ell \in [t]$, let $f_k^{(\ell)}$ be the $\ell$-th most frequent value appearing in the $k$-th column of matrix $M$ (if there are less than $\ell$ distinct values in the column, let $f_k^{(\ell)}$ be some other arbitrary integer). Note that for any value $x$ not in $\{f_k^{(1)}, f_k^{(2)}, \ldots, f_k^{(t)}\}$, $x$ appears in the $k$-th column of $M$ at most $n / t$ times; we call such values \emph{rare}. In the preprocessing phase, the algorithm prepares $t$~Boolean matrices $M^{(1)}, M^{(2)}, \ldots, M^{(t)} \in \{0,1\}^{n \times n}$ defined as follows:
\[M^{(\ell)}[i,k] \defeq \begin{cases} 1, & \text{if}\ M[i,k] = f_k^{(\ell)}, \\ 0, & \text{otherwise}.\end{cases}\]
Then, it instantiates the hypothesized Boolean-OMv algorithm for each of these matrices separately. Finally, for each column of $M$, the algorithm prepares a dictionary mapping each rare value in that column to a list of indices under which that value appears in the column. This ends the preprocessing phase.

Upon receiving a query $v \in \mathbb{Z}^n$, the algorithm first initializes the output vector to all zeros. Then, for every $\ell = 1, \ldots, t$, it creates the vector $v^{(l)}$ defined by
\[v^{(\ell)}[k] \defeq \begin{cases} 1, & \text{if}\ v[k] = f_k^{(\ell)}, \\ 0, & \text{otherwise},\end{cases}\]
and computes the Boolean product $M^{(\ell)}v^{(\ell)}$, using the $\ell$-th instantiation of the hypothesized Boolean-OMv algorithm. Each such product gets then element-wise OR-ed to the output vector. Finally, for every $k = 1, \ldots, n$, if $v[k]$ is a rare value in the $k$-th column of matrix $M$, the algorithm goes through the list of all indices $i$ such that $M[i][k]=v[k]$ (recall that there are at most $n/t$ of them) and for each of them sets the corresponding $i$-th entry of the output vector to~$1$.

It is easy to see that whenever the algorithm sets an output entry to $1$, it is because of some pair of entries $M[i][k]$ and $v[k]$ that have the same value. Conversely, if some pair of entries $M[i][k]$ and $v[k]$ have the same value, then either it is a frequent value and some $M^{(\ell)}v^{(\ell)}$ contributes a $1$, or it is a~rare value and gets manually matched.

Let us analyze the running of our $\exists$Equality-OMv algorithm. There are $t$ instantiations of the hypothesized Boolean-OMv algorithm, which require $O(tn^{3-\varepsilon})$ time in total. Then, going through all rare values takes at most $O(n^2/t)$ time per $v_j$, and thus $O(n^3 / t)$ time for all $n$ queries. This adds up to total time $O(tn^{3-\varepsilon} + n^3 / t)$. By choosing $t := \lceil n^{\varepsilon / 2} \rceil$ we get the claimed running time $O(n^{3-(\varepsilon/2)})$.
\end{proof}

\section{Reduction from Min-Max-OMv to \texorpdfstring{$\boldsymbol{\exists}$}{∃}Dominance-OMv}
\label{sec:minmax}

In this we show how a subcubic algorithm for $\exists$Dominance-OMv would yield a subcubic algorithm for Min-Max-OMv. Our proof is inspired by the static min-max product algorithms~\cite{VassilevskaWY09,DuanP09}, but it is at the same time simpler, because we do not optimize the dependence on $\varepsilon$ in the running time of the resulting algorithm.

\begin{theorem}\label{thm:minmax}
If \itE{}Dominance-OMv can be solved in time $O(n^{3-\varepsilon})$, for some $\varepsilon > 0$, then Min-Max-OMv can be solved in time $O(n^{3-(\varepsilon/2)})$.
\end{theorem}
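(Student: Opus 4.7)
The plan adapts the static min-max product algorithm of Duan-Pettie to the online setting: partition each row of $M$ into $t = \lceil n^{\varepsilon/2}\rceil$ blocks by sorted value, use $\exists$Dominance-OMv to pinpoint which block contains the optimal column $k^*$ in each row, and then compute the exact answer locally.

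During preprocessing, I sort every row $i$ of $M$ to obtain the permutation $\sigma_i$ and the row-specific thresholds $\tau_i^\ell := M[i, \sigma_i(\lceil \ell n/t\rceil)]$ for $\ell \in [t]$; write $B_i^\ell$ for the $\ell$-th block (the $n/t$ columns whose $M$-value falls in $(\tau_i^{\ell-1}, \tau_i^\ell]$) and $S_i^\ell := B_i^1 \cup \cdots \cup B_i^\ell$. For each $\ell$ I build an auxiliary matrix $\tilde M^{(\ell)}[i,k] := \tau_i^\ell$ when $k \in S_i^\ell$ and $-\infty$ otherwise, and feed the negation $-\tilde M^{(\ell)}$ into a separate $\exists$Dominance-OMv instance. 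At query time I query each instance with $-v$; the returned bit for row $i$ equals $1$ precisely when there is a $k \in S_i^\ell$ with $v[k]\leq\tau_i^\ell$, which is equivalent to $(M \omm v)[i]\leq\tau_i^\ell$. Monotonicity in $\ell$ then yields $\ell_i^* := \min\{\ell : (M \omm v)[i]\leq \tau_i^\ell\}$ per row. The key structural observation is that, by minimality of $\ell_i^*$, every $k \in S_i^{\ell_i^*-1}$ satisfies $v[k]>\tau_i^{\ell_i^*-1}\geq M[i,k]$, so $\max(M[i,k], v[k]) = v[k]$ on this prefix, giving
\[
(M \omm v)[i] \;=\; \min\Bigl(\,\min_{k \in B_i^{\ell_i^*}}\max(M[i,k], v[k]),\ \min_{k \in S_i^{\ell_i^*-1}} v[k]\Bigr).
\]
The first minimum is computed in $O(n/t)$ per row by explicit iteration over the single block $B_i^{\ell_i^*}$, contributing $O(n^2/t)$ per query.

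The main obstacle is the second term $Q_i := \min_{k \in S_i^{\ell_i^*-1}} v[k]$: a naive per-row scan over $S_i^{\ell_i^*-1}$ costs $\Theta(n)$ per row and $\Theta(n^3)$ overall, far above the target. My plan is to handle it with a second layer of $\exists$Dominance-OMv instances that, after sorting $v$ at query time to obtain the rank vector $\rho = \pi^{-1}$, let me ask whether $\exists k \in S_i^\ell$ with $\rho[k]$ below a chosen threshold (by setting up matrices of the form $-T$ on entries in $S_i^\ell$ and $+\infty$ elsewhere, then querying with $-\rho$). The delicate step — which I expect to be the main technical wrinkle — is tuning the threshold schedule so that the resulting localization of the minimum $v$-rank in $S_i^{\ell_i^*-1}$ yields a candidate window short enough to scan exactly within the overall $O(n^{3-\varepsilon/2})$ budget, without blowing up the number of dominance-OMv instances. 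Combining $O(\log n)$ dyadic stages with a final stage at rank-spacing $n/t$ appears to keep the total instance count at $\tilde O(t)$ and the per-query refinement at $O(n^2/t)$.

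Putting everything together, the dominance-OMv instances contribute $\tilde O(t \cdot n^{3-\varepsilon})$, while the block iteration plus the refinement contribute $\tilde O(n^2/t)$ per outer query. With $t = \lceil n^{\varepsilon/2}\rceil$ every summand is $\tilde O(n^{3-\varepsilon/2})$, which absorbs into $O(n^{3-\varepsilon/2})$ after a marginal decrease of $\varepsilon$.
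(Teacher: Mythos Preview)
Your first layer is fine and essentially matches the paper: bucketing each row of $M$ into $t$ value-blocks, using $t$ dominance instances to locate the block $\ell_i^*$ with $\tau_i^{\ell_i^*-1} < (M\omm v)[i] \le \tau_i^{\ell_i^*}$, and scanning that block in $O(n/t)$. Your decomposition $(M\omm v)[i]=\min\bigl(\min_{k\in B_i^{\ell_i^*}}\max(M[i,k],v[k]),\,Q_i\bigr)$ with $Q_i=\min_{k\in S_i^{\ell_i^*-1}}v[k]$ is also correct.

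The gap is in computing $Q_i$. The set $S_i^{\ell_i^*-1}$ depends on $i$ through the row partition \emph{and} on $v$ through $\ell_i^*$, so to localize its minimum $v$-rank to a window of length $n/t$ you need, for each row, the answer to ``$\exists k\in S_i^{\ell}:\rho[k]\le T$?'' at $\ell=\ell_i^*-1$ and $t$ evenly spaced thresholds $T$. Because different rows need different $\ell$'s, you end up probing a full $t\times t$ grid of $(\ell,T)$ pairs. Whether you bake $T$ into the matrix or into the query vector, this costs $\Theta(t^2)$ dominance-OMv calls per outer query, i.e.\ $\Theta(t^2 n^{3-\varepsilon})=\Theta(n^3)$ overall. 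Dyadic thresholds alone only pin the minimum rank down to a factor-$2$ window, which can have size $\Theta(n)$ and hence cannot be scanned; your ``final stage at rank-spacing $n/t$'' would have to reintroduce the $t$ thresholds per $\ell$, putting you back at $t^2$.

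The paper sidesteps this by not computing $Q_i$ at all. It observes that $\min_k\max(M[i,k],v[k])=\min(u[i],w[i])$ where $w[i]:=\min\{v[k]:v[k]\ge M[i,k]\}$, and computes $w$ by bucketing \emph{$v$} (not $M$'s rows): sort $v$ into $t$ buckets, and for each bucket $\ell$ ask the single preprocessed dominance instance for $M$ whether $M\ole v^{(\ell)}$ holds, with $v^{(\ell)}$ equal to $v$ on bucket $\ell$ and $-\infty$ elsewhere. This is only $t$ extra dominance queries per outer query on one matrix. Since $w[i]\le Q_i$ and both are at least $(M\omm v)[i]$, one has $\min\bigl(\text{block-scan},\,w[i]\bigr)=(M\omm v)[i]$ as well, so $w[i]$ is a drop-in replacement for your $Q_i$ that keeps the total cost at $O(t\,n^{3-\varepsilon}+n^3/t)=O(n^{3-\varepsilon/2})$.
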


\begin{proof} 
Let $t \defeq \lceil n^{\varepsilon / 2} \rceil$ be a parameter. For every $i \in [n]$, let $R_i$ be the sorted $i$-th row of the input matrix $M$. Consider partitioning each $R_i$ into $t$ \emph{buckets} of consecutive elements, with at most $\lceil n / t \rceil$ elements per bucket. For every $\ell \in [t]$, let $M^{(\ell)} \in (\mathbb{Z}\cup\{\infty\})^{n \times n}$ be the matrix defined as follows:
\[M^{(\ell)}[i,k] \defeq \begin{cases} -M[i,k], & \text{if $M[i,k]$ lands in the $\ell$-th bucket of $R_i$},\\ \infty, & \text{otherwise}.\end{cases}\]
Note that each row of $M^{(\ell)}$ contains $\Theta(n/t)$ finite entries.\footnote{If there are multiple entries with the same value, they may land in different buckets.}

In the preprocessing phase, the algorithm instantiates the hypothesised $\exists$Dominance-OMv algorithm for each of the matrices $M^{(1)}, M^{(2)}, \ldots, M^{(\ell)}$, and also for the matrix $M$.\footnote{Formally, the $\exists$Dominance-OMv algorithm may not accept infinite entries in the input, but we can replace each $\infty$ with $W+2$, where $W$ denotes the largest absolute value of any entry in $M$, and each entry greater than $W$ in any query vector with $W+1$.}

Upon receiving a query $v \in \mathbb{Z}^n$, the algorithm proceeds to compute the product $M \omm v$ in two steps. First, for every $i \in [n]$, it computes the minimum $M[i,k]$ such that $M[i,k] \geqslant v[k]$, and stores the results in a column vector $u$. Second, for every $i \in [n]$, it computes the minimum $v[k]$ such that $v[k] \geqslant M[i,k]$, and stores the results in a column vector $w$. At the very end the algorithm computes $(M \omm v)[i] = \min \{ u[i], w[i] \}$, for every $i \in [n]$.

In order to compute $u$, the algorithm first asks for the dominance products $M^{(\ell)} \ole (-v)$, for all $\ell \in [t]$. Then, for each $i = 1, \ldots, n$, the algorithm finds the smallest $\ell$ such that $(M^{(\ell)} \ole (-v))[i] = 1$, which corresponds to finding the first bucket in $R_i$ containing an element \emph{greater}\footnote{This is because the entries in $M^{(\ell)}$ and $-v$ are negated.} than or equal to the corresponding element in $v$. Hence, the algorithm can scan the elements in this bucket and pick the smallest one that is larger than or equal to the corresponding element in $v$; this element is then stored in $u[i]$.

Let us analyze the cost of computing $u$'s over the span of $n$ queries. The $t$~dominance products require time $O(tn^{3-\varepsilon})$ in total. On top of that, for each of the $n$ queries and for each of the $n$ output coordinates, the algorithm scans one bucket of size $\Theta(n/t)$, which takes time $O(n^3/t)$ in total. All together, the algorithm spends time $O(tn^{3-\varepsilon} + n^3/t) = O(n^{3-(\varepsilon / 2)})$ on computing $u$'s.

Next, it is almost symmetric to calculate $w$. The algorithm sorts the entries of $v$ into an ordered list $S$, and partitions $S$ into $t$ buckets, with at most $\lceil n/t \rceil$ elements per bucket. For each bucket $\ell \in [t]$, the algorithm computes the dominance product $M \ole v^{(\ell)}$, where $v^{(\ell)} \in (\mathbb{Z}\cup\{-\infty\})^n$ is the column vector such that
\[v^{(\ell)}[k] = \begin{cases} v[k], & \text{if $v[k]$ lands in the $\ell$-th bucket of $S$}, \\ -\infty, & \text{otherwise}.
\end{cases}\]
Then, for each $i = 1, \ldots, n$, the algorithm looks for the smallest $\ell$ such that $(M \ole v^{(\ell)})[i] = 1$, and scans the elements in the $\ell$-th bucket looking for the smallest $v[k]$ that is greater than or equal to the corresponding $M[i,k]$. By the same argument as before, computing all $w$'s takes time $O(n^{3 - (\varepsilon / 2)})$.
\end{proof}

\section{Reduction from Bounded Monotone Min-Plus-OMv to \texorpdfstring{$\boldsymbol{\exists}$}{∃}Equality-OMv}
\label{sec:minplus}

In this section we show a reduction from Bounded Monotone Min-Plus-OMv to $\exists$Equality-OMv. We follow the high-level approach of some of the previous static algorithms for the bounded monotone min-plus product~\cite{WilliamsX20b,Gu0WX21}. However, we deviate from that approach by using the equality product where the known algorithms use a generalization of the min-witness and bounded (non-monotone) min-plus products (see~\cite[Theorem~1.2]{WilliamsX20b}).

\begin{theorem}\label{thm:minplus}
If \itE{}Equality-OMv can be solved in time $O(n^{3-\varepsilon})$, for some $\varepsilon > 0$, then Bounded Monotone Min-Plus-OMv can be solved in time $O(n^{3 - (\varepsilon / 3)} \log n)$ by a randomized algorithm that succeeds with probability\footnote{Note that the success probability can be amplified to $1-1/\operatorname{poly}(n)$ by running in parallel a constant number of copies of the algorithm and taking the majority vote.} at least $1-1/n$.
\end{theorem}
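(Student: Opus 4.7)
The plan is to adapt ideas from the static bounded monotone min-plus algorithms~\cite{WilliamsX20b, Gu0WX21} to the online setting, but with $\exists$Equality-OMv queries playing the role that integer matrix multiplication or min-witness products play there. Let $t \defeq \lceil n^{\varepsilon/3} \rceil$.

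First I would do a coarse localization of $u[i] \defeq \min_k (M[i,k]+v[k])$. Bucket the value range by setting $\tilde M[i,k] \defeq \lceil t \cdot M[i,k]/n \rceil$ and $\tilde v[k] \defeq \lceil t \cdot v[k]/n \rceil$, both landing in $[t]$. For each query $v$ and each possible bucket-sum $c \in \{2,\ldots,2t\}$, issue one $\exists$Equality-OMv query on the preprocessed matrix $\tilde M$ with query vector $c\mathbf{1}-\tilde v$: the $i$-th output bit is $1$ iff $\exists k$ with $\tilde M[i,k]+\tilde v[k]=c$. Extracting the smallest such $c$ gives $c^*(i)$, which pins $u[i]$ to a window of width $O(n/t)$. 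This uses $O(t)$ equality queries per real query and takes $O(t \cdot n^{3-\varepsilon}) = O(n^{3-2\varepsilon/3})$ time in total.

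Next I would refine within the window to obtain $u[i]$ exactly. For each row $i$, iterate over the bucket pairs $(a,b)$ with $a+b \in \{c^*(i),c^*(i)+1\}$, so that the truly optimal $k$ must land in one of the corresponding cells (up to the $\pm1$ rounding error); for each pair, enumerate the columns $k$ in the cell $\{k : \tilde M[i,k]=a \land \tilde v[k]=b\}$ and directly compute $M[i,k]+v[k]$, keeping the minimum. Under each of the four monotonicity conditions the cells have useful structure enabling fast enumeration: when rows of $M$ are nondecreasing, $\{k : \tilde M[i,k]=a\}$ is a contiguous interval that can be intersected with a precomputed bucket map of $\tilde v$ in time proportional to the output size; the column-monotone case and the case of monotone $v_j$ are handled symmetrically; and the remaining case ($v_j[k]$ nondecreasing in $j$) additionally admits an amortized argument via the fact that $u_j[i]$ is then nondecreasing in $j$ and ranges in $[2,2n]$, so the total number of updates across all queries is $O(n^2)$.

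The main obstacle will be bounding the refinement cost: an adversarial input could concentrate $\Omega(n)$ columns in a single relevant cell, making deterministic enumeration too expensive. To counter this, I would apply a uniformly random order-preserving shift to the bucket boundaries and repeat the entire algorithm $O(\log n)$ times with independent shifts, outputting for each row the minimum of the computed answers. Since each candidate $k$ lands in the correct sum-cell with probability $O(1/t)$ over the random shift, the expected refinement work per query is $O(n^2/t)$, and a standard Markov-plus-union-bound argument over the $O(\log n)$ repetitions shows that the total refinement time is $O(n^3 \log n / t) = O(n^{3-\varepsilon/3} \log n)$ with probability at least $1-1/n$. This randomization accounts for both the advertised $\log n$ factor and the $1/n$ failure probability, and the overall runtime is dominated by the refinement, giving the claimed $O(n^{3-\varepsilon/3}\log n)$ bound.
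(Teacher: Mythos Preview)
Your coarse localization via $\exists$Equality queries is sound, but the refinement step has a genuine gap: the random-shift argument cannot bound the enumeration cost. Whether a column $k$ ends up in a cell with bucket-sum in $\{c^*(i),c^*(i)+1\}$ is, up to constants, the condition $M[i,k]+v[k]\le u[i]+O(n/t)$, and this condition is invariant under any additive shift of the bucket boundaries (the shifts cancel when you compare two sums). Concretely, take $M[i,k]=1$ and $v[k]=1$ for all $i,k$ (monotone in all four directions): every $k$ is a candidate for every $i$ under every shift, so your refinement spends $\Theta(n^2)$ per query and $\Theta(n^3)$ total. The claim ``each candidate $k$ lands in the correct sum-cell with probability $O(1/t)$'' is therefore false, and neither the $\log n$ factor nor the $1-1/n$ success probability is accounted for.

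The paper confronts exactly this obstacle and resolves it with a second, different use of the $\exists$Equality oracle. It splits rows by the size of the candidate set $C_i$. When $|C_i|\le n/\Delta$ it enumerates, and this is where monotonicity is actually used: data-structure arguments (BSTs or RMQ, depending on the direction) compute the scaled product $\widehat M\oplus\widehat v$ and list small $C_i$'s in $O(n^2\log n/\Delta)$ per query. When $|C_i|>n/\Delta$, it samples $O(\Delta\log n)$ random columns $r$; with high probability some $r$ hits $C_i$, which forces $M[i,r]+v[r]-u[i]\in\{0,\ldots,3\Delta-2\}$. Having preprocessed, for each sampled $r$, the column-shifted matrix $M^{(r)}[i,k]=M[i,k]-M[i,r]$, the exact value $u[i]$ is then read off from the equality products $M^{(r)}\oeq -(v-v[r]+\delta)$ over all $\delta$ in that range. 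The missing idea in your plan is precisely this: the equality oracle is used not only to localize $u[i]$ to a window, but to recover it exactly once a reference column inside $C_i$ is known. The randomness (and the $\log n$, and the $1-1/n$) comes from this hitting-set step, not from repeating with shifted buckets.
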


Before we present the algorithm itself let us introduce some notation and prove some preliminary facts. Let $\Delta \defeq \lceil n^{\varepsilon / 3} \rceil$ be a parameter. For a fixed query vector $v \in \mathbb{Z}^n$, let
\[u \defeq M \oplus v, \quad \widehat{M} \defeq \lfloor M / \Delta \rfloor, \quad \widehat{v} \defeq \lfloor v / \Delta \rfloor, \quad \text{and} \quad \widehat{u} \defeq \widehat{M} \oplus \widehat{v}.\]
Be mindful that it is \emph{not} necessarily the case that $\widehat{u} = \lfloor u / \Delta \rfloor$. Finally, for every $i \in [n]$, let us define the set of \emph{candidates for} $u[i]$ to be
\[C_i \defeq \bigl\{ k \in [n] \bigm\vert \widehat{M}[i,k] + \widehat{v}[k] \in \{\widehat{u}[i], \widehat{u}[i]+1\} \bigr\}.\]

\begin{lemma}\label{lem:sufficientCi}
It suffices to check only $k \in C_i$ in order to compute $u[i]$, i.e.,
\[\min_{k\in[n]} M[i,k]+v[k] = \min_{k \in C_i} M[i,k]+v[k].\]
\end{lemma}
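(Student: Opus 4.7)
The plan is to show that any minimizer $k^* \in \arg\min_{k \in [n]} M[i,k] + v[k]$ lies in $C_i$, which immediately implies the claimed equality since $C_i \subseteq [n]$.

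The key tool is a pair of rounding inequalities. Because $\widehat{M}[i,k] = \lfloor M[i,k]/\Delta \rfloor$ and $\widehat{v}[k] = \lfloor v[k]/\Delta \rfloor$, each floor loses less than $\Delta$, so for every $k \in [n]$,
\[\frac{M[i,k] + v[k]}{\Delta} - 2 \;<\; \widehat{M}[i,k] + \widehat{v}[k] \;\leq\; \frac{M[i,k] + v[k]}{\Delta}.\]
This is the only ``analytic'' input to the proof.

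Next I would apply the upper inequality to $k = k^*$ to get $\widehat{M}[i,k^*] + \widehat{v}[k^*] \leq u[i]/\Delta$. Symmetrically, I would pick some $k' \in \arg\min_k \widehat{M}[i,k] + \widehat{v}[k]$ (so $\widehat{u}[i] = \widehat{M}[i,k'] + \widehat{v}[k']$), apply the lower inequality at $k'$, and use $M[i,k'] + v[k'] \geq u[i]$ to conclude $\widehat{u}[i] > u[i]/\Delta - 2$. Chaining these two bounds yields $\widehat{M}[i,k^*] + \widehat{v}[k^*] < \widehat{u}[i] + 2$, and since the left-hand side is an integer this sharpens to $\widehat{M}[i,k^*] + \widehat{v}[k^*] \leq \widehat{u}[i] + 1$. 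The matching lower bound $\widehat{M}[i,k^*] + \widehat{v}[k^*] \geq \widehat{u}[i]$ is free from the definition of $\widehat{u}[i]$ as a minimum, so $k^* \in C_i$ as desired.

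I do not anticipate any real obstacle: the argument is pure bookkeeping of rounding error. The one spot worth being careful about is the ``$+1$'' versus ``$+2$'' in the definition of $C_i$ --- two independent floors can cost up to $2\Delta$, which is exactly why one cannot shrink $C_i$ to the set where $\widehat{M}[i,k] + \widehat{v}[k] = \widehat{u}[i]$, but integrality of the quantized sum does let us trim the slack from $2$ down to $1$.
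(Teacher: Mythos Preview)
Your proof is correct and is essentially identical to the paper's argument: both use the two-sided rounding bound $0 \le (M[i,k]+v[k]) - \Delta(\widehat{M}[i,k]+\widehat{v}[k]) < 2\Delta$, pick a witness for $u[i]$ and a witness for $\widehat{u}[i]$, chain the resulting inequalities to get $\widehat{M}[i,k^*]+\widehat{v}[k^*] < \widehat{u}[i]+2$, and then invoke integrality. The only cosmetic difference is that the paper writes the chain multiplied through by $\Delta$ rather than divided by it.
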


\begin{proof}
First, for any pair $(i, j) \in [n] \times [n]$, due to rounding down we have
\[M[i,j] + v[j] - \Delta \cdot (\widehat{M}[i, j] + \widehat{v}[j]) \in [0, 2\Delta).\]
Now, suppose that $k$ is a witness for $u[i]$, and $l$ is a witness for $\widehat{u}[i]$, i.e., $M[i,k] + v[k] = u[i]$, and $\widehat{M}[i,l] + \widehat{v}[l] = \widehat{u}[i]$. We derive that
\begin{align*}
    \Delta \cdot \widehat{u}[i] + 2\Delta &= \Delta \cdot (\widehat{M}[i,l] + \widehat{v}[l]) + 2\Delta \\
    &> M[i,l] + v[l] \\
    &\geqslant M[i,k] + v[k] \\
    & \geqslant \Delta \cdot (\widehat{M}[i,k] + \widehat{v}[k]).
\end{align*}
Therefore, we have $\widehat{M}[i,k] + \widehat{v}[k] < \widehat{u}[i] + 2$. Since the matrix entries all take integer values, we have that if $k \in [n]$ is a witness for $u[i]$, then it must satisfy that $\widehat{M}[i,k] + \widehat{v}[k] \in \{\widehat{u}[i], \widehat{u}[i]+1\}$, i.e., $k \in C_i$.
\end{proof}

Now we argue that small sets of candidates can be enumerated efficiently.

\begin{lemma}\label{lem:listCi}
For a fixed query vector $v \in \mathbb{Z}^n$, there is an algorithm that
runs in time $O(n^2\log n / \Delta)$ and lists all elements of all sets $C_i$ such that $|C_i| \leqslant n / \Delta$.
In the case that $v_j[k]$ is a (weakly) monotone function of $j$ (i.e., the 4-th case in Definition~\ref{def:minplus}) this running time is amortized over $n$ query vectors.
\end{lemma}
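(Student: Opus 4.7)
The plan is to handle each of the four monotonicity conditions from Definition~\ref{def:minplus} separately, though a single idea drives all of them: since every entry of $\widehat{M}$ and $\widehat{v}$ lies in $\{0, 1, \ldots, \lceil n/\Delta\rceil\}$, any sequence among them that is nondecreasing decomposes into at most $O(n/\Delta)$ maximal blocks of constant value. I would exploit this block structure to localize the candidate positions, then read them off using a mix of range-minimum queries and bucketed sorted lists.

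In Case~1 (rows of $M$ nondecreasing), each row $\widehat{M}[i,\cdot]$ partitions into $O(n/\Delta)$ such blocks. I would precompute these decompositions once and, for each query, build in $O(n)$ time a range-min structure on $\widehat{v}$; then on each block $[k_1, k_2]$ of value $a$ I can compute $a + \min_{k \in [k_1,k_2]} \widehat{v}[k]$ in $O(1)$ and thus obtain $\widehat{u}[i]$ in $O(n/\Delta)$ per row. To enumerate $C_i$ I would additionally bucket the positions of $\widehat{v}$ by value once per query, so that inside each block I can binary-search in $O(\log n)$ for positions with $\widehat{v}[k] \in \{\widehat{u}[i] - a,\, \widehat{u}[i] - a + 1\}$. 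Summing gives $O((n^2 \log n)/\Delta)$ per query. Case~3 ($\widehat{v}$ nondecreasing) is symmetric, with the blocks now coming from $\widehat{v}$; the analogous per-row sorted tables on $\widehat{M}$ are built once during the OMv preprocessing phase.

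Case~2 (columns of $M$ nondecreasing) is the trickiest of the three ``per-query'' cases, because rows of $\widehat{M}$ need not be monotone. Here I would invert the viewpoint: each column $k$ contributes $O(n/\Delta)$ triples of the form ``rows in $[i_1, i_2]$ see value $a + \widehat{v}[k]$ from column $k$'', for a total of $O(n^2/\Delta)$ triples. I would feed these to a range-min segment tree over rows to compute $\widehat{u}$, and then, after building for each value $c$ a sorted list of rows with $\widehat{u}[i] = c$, re-scan the triples, binary-searching for rows in $[i_1, i_2]$ with $\widehat{u}[i] \in \{a + \widehat{v}[k] - 1,\, a + \widehat{v}[k]\}$ and emitting the resulting $(i,k)$ pairs. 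To respect the guarantee that only small $C_i$ need be listed, I would perform a preliminary counting rescan to compute $|C_i|$ and then emit pairs only for rows with $|C_i| \leqslant n/\Delta$, capping the total output at $n^2/\Delta$.

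The remaining Case~4 is where I expect the main obstacle, since within a single query neither $\widehat{M}$ nor $\widehat{v}$ need be monotone and the bound must hold only amortized. The key fact to exploit is that each $\widehat{v}_j[k]$ is nondecreasing in $j$ and bounded by $\lceil n/\Delta\rceil$, so the total number of unit increments summed over all queries and coordinates is at most $n^2/\Delta$. I would maintain, across queries, for each row $i$ a linked-list bucket per possible sum value $c \in \{0, \ldots, 2\lceil n/\Delta\rceil\}$ holding the positions $k$ with $\widehat{M}[i,k] + \widehat{v}[k] = c$, together with a monotone pointer to the smallest nonempty bucket. Each unit increment of $\widehat{v}[k]$ costs $O(1)$ per row to migrate $k$ one bucket upward, and because $\widehat{v}$ only grows, each row's min-pointer can only advance upward; summing, updates cost $O(n \cdot n^2/\Delta) = O(n^3/\Delta)$ and pointer movement a further $O(n^2/\Delta)$. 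Per query I then read $\widehat{u}[i]$ in $O(1)$ per row, check $|C_i| \leqslant n/\Delta$ from the bucket sizes, and list $C_i$ in $O(|C_i|)$ time, matching the claimed $O((n^2 \log n)/\Delta)$ amortized bound with room to spare.
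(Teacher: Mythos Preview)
Your treatment of the row-monotone $M$ and monotone-$v$ cases matches the paper's (its Cases~(3) and~(4)), and your bucket-based handling of the amortized case is essentially equivalent to the paper's BST-per-row sweep (its Case~(2)), arguably cleaner since it sidesteps the $\log n$ factor.

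The gap is in your Case~2, where the columns of $M$ are monotone. Your ``preliminary counting rescan'' must attribute a count to each individual row $i$, and the only mechanism you describe is to enumerate, for every triple, the matching rows found by binary search; that costs $\sum_i |C_i|$ in total, which can be $\Theta(n^2)$ (take $M$ and $v$ constant, so $\widehat M\equiv\widehat v\equiv 0$ and every $C_i=[n]$, yet there are only $n$ triples and each matches all $n$ rows). The same blow-up afflicts the subsequent listing rescan: even knowing which rows are large, you cannot skip them without iterating past them inside each sorted sublist. One fix is to compute the counts by prefix sums---for each value $c$, the triples with that value are intervals over rows, so $N_c[i]:=|\{k:\widehat M[i,k]+\widehat v[k]=c\}|$ is a sum of interval indicators computable in $O(n^2/\Delta)$ total---and then, for each small $i$, list $C_i$ by interval stabbing among the triples of value $\widehat u[i]$ and $\widehat u[i]+1$. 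The paper avoids the whole difficulty: it sweeps $i=1,\ldots,n$ while maintaining the multiset $\{(\widehat M[i,k]+\widehat v[k],\,k):k\in[n]\}$ in a balanced BST; column monotonicity bounds the total number of updates by $O(n^2/\Delta)$, and the BST yields $\widehat u[i]$, $|C_i|$, and (when small) the elements of $C_i$ directly in $O(\log n+|C_i|)$ per row.
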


\begin{proof}
We consider four cases, based on the direction of the monotonicity:

\proofsubparagraph{(1) Each column of $\boldsymbol{M}$ is monotone.}
In this case also the columns of $\widehat{M}$ are monotone, and their entries are bounded by $\lfloor n/\Delta \rfloor$. The algorithm uses a self-balancing binary search tree (BST) to maintain, while $i$ iterates from $1$ to $n$, the set of pairs
\[\big\{(\widehat{M}[i,k] + \widehat{v}[k], k) \bigm\vert k \in [n]\big\}.\]
Computing $\widehat{u}[i]$ is the standard tree operation of querying for the minimum. Moreover, the BST can report the number of elements smaller than a certain value in time $O(\log n)$, and enumerate them in time proportional to that number. This allows the algorithm to determine the size of $C_i$ quickly, and enumerate it if it is small. As $i$ iterates from $1$ to $n$, the algorithm only needs to update the elements where there is an increase (or decrease) from $\widehat{M}[i,k]$ to $\widehat{M}[i+1,k]$. In each column of $M$ there are at most $n/\Delta$ such changes, thanks to the boundedness and monotonicity.\footnote{The locations of all such changes can be, e.g., precomputed before processing any query, or computed during each query using binary search in time $O(\log n)$ per each location.} Therefore the total number of updates over the $n$ iterations is at most $n^2/\Delta$, and each update takes time $O(\log n)$. The time spent on listing elements of $C_i$ (for all $i$) is $O(n \log n + n^2/\Delta)$.

\proofsubparagraph{(2) For each $\boldsymbol{k}$, $\boldsymbol{v_j[k]}$ is a monotone function of $\boldsymbol{j}$.}
This case is very similar to the previous one. The algorithm maintains (over the span of $n$~queries) a separate BST for each $i$, and uses it to compute $(\widehat{M}\oplus\widehat{v}_j)[i]$ for all $j$'s. When there is an increase (or decrease) from $\widehat{v}_j[k]$ to $\widehat{v}_{j+1}[k]$, the algorithm has to update an element in all $n$ trees, but this happens at most $n/\Delta$ times for each $k$, so $n^2/\Delta$ times for all $k$'s. Hence, the total time spent on such updates over the course of $n$ queries is $O(n^3\log n/\Delta)$, and the amortized time per query is $O(n^2\log n/\Delta)$.

\proofsubparagraph{(3) Each row of $\boldsymbol{M}$ is monotone.}
Due to the monotonicity, we can think of the $i$-th row of $\widehat{M}$, for each $i=1,\ldots,n$, as consisting of $\lfloor n / \Delta \rfloor + 1$ contiguous \emph{blocks} $K^{(0)}_i, K^{(1)}_i, \dots, K^{(\lfloor n / \Delta \rfloor)}_i \subseteq [n]$ of identical entries, i.e., $\forall \raisebox{-.25em}{$\scriptstyle k \in K^{(x)}_i$} \:  M[i,k]=x$. Upon receiving a query vector $v$, the algorithm builds (in linear time) a range minimum query (RMQ) data structure (see, e.g., \cite{BenderF00}) in order to compute in constant time the minimum entry of $\widehat{v}$ in each of the $O(n^2/\Delta)$ blocks, i.e., $\widehat{v}[[K^{(x)}_i]] \defeq \min \{\widehat{v}[k] \mid k \in K^{(x)}_i\}$. Adding each of these minima to their corresponding values from $\widehat{M}$ gives a list of candidate values for $\widehat{u}[i]$'s, i.e.,
\[\widehat{u}[i] = \min\big\{
0 + \widehat{v}[[K^{(0)}_i]], \:
1 + \widehat{v}[[K^{(1)}_i]], \:
\ldots, \:
\lfloor n / \Delta \rfloor + \widehat{v}[[K^{(\lfloor n / \Delta \rfloor)}_i]]
\big\}.\]
Thus, we already know how to compute $\widehat{u}$ is time $O(n^2 / \Delta)$. Now let us explain how to extend this idea to also list elements of all small enough $C_i$'s. For each value that appears in $\widehat{v}$, the algorithm calculates the sorted sequence of indices under which this value appears in $\widehat{v}$. This allows computing in time $O(\log n)$ how many times a given value appears in a given range of indices in $\widehat{v}$; indeed, it boils down to performing two binary searches of the two endpoints of the range in the sequence corresponding to the given value. Furthermore, all these appearances can be enumerated in time proportional to their count.
For each block $K^{(x)}_i$ such that $x + \widehat{v}[[K^{(x)}_i]] = \widehat{u}[i]$ the algorithm enumerates all appearances of $\widehat{v}[[K^{(x)}_i]]$ and $\widehat{v}[[K^{(x)}_i]]+1$ in the range  $K^{(x)}_i$ in $\widehat{v}$, and adds them to $C_i$. If the total size of $C_i$ would exceed $n / \Delta$, the algorithm stops the enumeration and proceeds to the next block. Similarly, for each block such that $x + \widehat{v}[[K^{(x)}_i]] = \widehat{u}[i] + 1$ the algorithm enumerates all appearances of $\widehat{v}[[K^{(x)}_i]]$.

\proofsubparagraph{(4) Each $\boldsymbol{v}$ is monotone.}
This case is symmetric to the previous one. The difference is that now the algorithm splits $\widehat{v}$ into $O(n / \Delta)$ blocks, and prepares an RMQ data structure for each row of $\widehat{M}$.
\end{proof}

Now we are ready to present our subcubic algorithm for Bounded Monotone Min-Plus OMv, assuming a subcubic algorithm for $\exists$Equality-OMv.

\begin{proof}[Proof of Theorem~\ref{thm:minplus}]
In the preprocessing, the algorithm samples uniformly and independently at random a set $R \subseteq [n]$ of columns of $M$, of size $|R| \defeq \lceil 3 \Delta \ln n \rceil$. For each $r \in R$, the algorithm prepares an $\exists$Equality-OMv data structure for matrix $M^{(r)}$ obtained from $M$ by subtracting the $r$-th column from all the columns, i.e.,
\[M^{(r)}[i,k] \defeq M[i,k] - M[i,r].\]

The algorithm handles each query in two independent steps. The goal of the first step is to compute $u[i]$ for those $i$ that have $|C_i| \leqslant n/\Delta$, and the goal of the second step is to compute $u[i]$ for $i$ with $|C_i| > n/\Delta$.

\proofsubparagraph{First step.}
For each $i \in [n]$, the algorithm either finds out that $|C_i| > n / \Delta$, or lists all elements of $C_i$ and then computes $u[i] = \min_{k \in C_i} (M[i,k] + v[k])$. By Lemma~\ref{lem:listCi}, this takes time $O(n^2 \log n / \Delta)$, for all $i$'s in total. The correctness of this step follows from Lemma~\ref{lem:sufficientCi}.

\proofsubparagraph{Second step.}
In the second step, the algorithm must compute the remaining $u[i]$'s, i.e., those for which $C_i$'s contain too many elements to be handled in the first step. To this end, for every $r \in R$ and every $\delta \in \{0, 1, \ldots, 3 \Delta - 2\}$ the algorithm computes the equality product $M^{(r)} \oeq -(v - v[r] + \delta)$. For every $i \in [n]$, if $(M^{(r)} \oeq -(v - v[r] + \delta))[i]$=1, then there must exist $k \in [n]$ such that
\[M[i,k] - M[i,r] = - (v[k] - v[r] + \delta)\]
and hence
\[M[i,k] + v[k] = M[i,r] + v[r] - \delta.\]
The algorithm therefore adds $M[i,r] + v[r] - \delta$ to the list of possible values for $u[i]$, and at the end of the process it sets each $u[i]$ to the minimum over those values.

\proofsubparagraph{Analysis of the second step.}
We now argue that if $R \cap C_i \neq \emptyset$ (which holds with high probability when $|C_i| > n/\Delta$ via a standard hitting set argument, see below), then the algorithm correctly computes $u[i]$ in the second step. Indeed, pick $r \in R \cap C_i$ and let $k \in [n]$ be a witness for $(M \oplus v)[i]$, i.e., $M[i,k]+v[k]=(M \oplus v)[i]$. Let $\delta \defeq (M[i,r]+v[r]) - (M[i,k]+v[k])$. Clearly, $(M^{(r)} \oeq -(v - v[r] + \delta))[i]$=1, so it only remains to show that $\delta \in \{0,1,\ldots,3\Delta - 2\}$. Obviously, $\delta \geqslant 0$, because $k$ minimizes $M[i,k]+v[k]$. Now let us upper bound the offset $\delta$. Since $r \in C_i$, we have $\widehat{M}[i,r] + \widehat{v}[r] \leqslant \widehat{u}[i]+1$, and hence 
\[M[i,r] + v[r] \ \leqslant \ (\Delta \widehat{M}[i,r] + \Delta - 1) + (\Delta \widehat{v}[r] + \Delta - 1) \ \leqslant \ \Delta \widehat{u}[i] + 3\Delta-2.\]
Moreover, $\widehat{M}[i,k] + \widehat{v}[k] \geqslant \widehat{u}[i]$, and therefore
\[M[i,k] + v[k] \ \geqslant \ \Delta \widehat{M}[i,k] + \Delta \widehat{v}[k] \ \geqslant \ \Delta \widehat{u}[i].\]
We conclude that $\delta \leqslant (\Delta \widehat{u}[i] + 3\Delta-2) - \Delta \widehat{u}[i] = 3\Delta - 2$, as required.
    
It remains to analyze the success probability of the whole algorithm. For a fixed output index $i \in [n]$ such that $|C_i| > n / \Delta$, the probability that the algorithm failed to sample an element $r$ from $C_i$ in all $|R|=\lceil 3\Delta\ln n \rceil$ rounds is at most $(1-1/\Delta)^{3\Delta\ln n} < (1/e)^{3 \ln n} = 1 / n^3$. By a union bound over all $n$~output indices for each of the $n$ queries, the algorithm succeeds to correctly compute all $n^2$ output entries with probability at least $1-n^2/n^3 = 1-1/n$.

\proofsubparagraph{Running time.}
The first step of each query (Lemma~\ref{lem:listCi}) takes time $O(n^2\log n / \Delta)$, summing up to $O(n^3\log n / \Delta)$ for all $n$ queries. Regarding the second step, for each query the algorithm computes $O(|R|\Delta) = O(\Delta^2 \log n)$ equality matrix-vector products, and over the course of $n$ queries this takes time $O(n^{3-\varepsilon} \Delta^2 \log n)$. The total running time is therefore $O(n^3 \log n / \Delta + n^{3 - \varepsilon} \Delta^2 \log n) = O(n^{3-(\varepsilon/3)} \log n)$.
\end{proof}

\section{Remaining reductions}
\label{sec:remaining}

In this section we state the remaining easy reductions that complete the proof of Theorem~\ref{thm:main}.

\subsection{Reduction from \texorpdfstring{$\boldsymbol{\exists}$}{∃}Dominance-OMv to \texorpdfstring{$\boldsymbol{\exists}$}{∃}Equality-OMv}

\begin{observation}\label{obs:dom}
If \itE{}Equality-OMv can be solved in time $T(n)$, then \itE{}Dominance-OMv can be solved in time $O(T(n) \log n)$.
\end{observation}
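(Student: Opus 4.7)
The plan is to reduce each $\exists$Dominance-OMv query to the OR of $O(\log n)$ $\exists$Equality-OMv queries via a dyadic tree decomposition in rank space. First, I would rank-normalize the input: in preprocessing, sort each column $k$ of $M$ and replace every entry $M[i,k]$ by its rank $\rho_{i,k} \in [n]$ in that column (breaking ties by row index). Under this transformation $M[i,k] \leqslant v[k]$ is equivalent to $\rho_{i,k} \leqslant q_k$, where $q_k \in \{0,1,\ldots,n\}$ is the largest rank whose value in the sorted $k$-th column is still $\leqslant v[k]$, and each $q_k$ can be computed in $O(\log n)$ time per query via binary search in the stored sorted columns.

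Next, I would fix a complete binary tree $T$ over the leaf set $[n]$, with $L = \lceil \log_2 n \rceil$ levels, and assign a globally unique integer ID to every node of $T$. For each level $\ell \in [L]$, define the matrix $M^{(\ell)} \in \mathbb{Z}^{n \times n}$ by letting $M^{(\ell)}[i,k]$ be the ID of the level-$\ell$ ancestor of the leaf $\rho_{i,k}$, and instantiate the hypothesized $\exists$Equality-OMv data structure separately for each of these $L$ matrices. This costs $L$ parallel instantiations in preprocessing, totalling $O(T(n) \log n)$ amortized work across all queries plus $O(n^2 \log n)$ for computing ranks and ancestor tables.

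To answer a dominance query $v$, I would compute the thresholds $q_k$ and then use the standard prefix decomposition of each $[1, q_k]$ into at most one maximal dyadic subtree per level of $T$. For each level $\ell$, set $v^{(\ell)}[k]$ to the ID of the level-$\ell$ subtree appearing in the decomposition of $[1, q_k]$, or to a fresh sentinel value (disjoint from all IDs and chosen anew per query) if no such subtree exists at that level. The bit $(M^{(\ell)} \oeq v^{(\ell)})[i]$ is then $1$ exactly when some $k$ has $\rho_{i,k}$ inside the level-$\ell$ subtree chosen for column $k$; OR-ing the $L$ output bits per coordinate yields $\exists_k \rho_{i,k} \leqslant q_k$, which is the desired dominance output. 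The per-query overhead beyond the equality calls is $O(n \log n)$ for binary searches and dyadic decompositions, so the total running time matches the claimed $O(T(n) \log n)$. I do not foresee any real obstacle here; the only points of care are making the ancestor IDs globally unique across $T$ (so that equality across different $k$'s really tests membership in the \emph{correct} subtree) and picking sentinel IDs that cannot coincide with any entry of any $M^{(\ell)}$.
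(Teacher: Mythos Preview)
Your proposal is correct and is essentially the same folklore argument the paper uses: both reduce a single dominance test to $O(\log n)$ equality tests via the dyadic/binary structure, and the paper's bitwise encoding ($M^{(\ell)}[i,k]=\lfloor M[i,k]/2^{\ell+1}\rfloor$ when bit~$\ell$ of $M[i,k]$ is~$0$, and $v^{(\ell)}[k]=\lfloor (v[k]{+}1)/2^{\ell+1}\rfloor$ when bit~$\ell$ of $v[k]{+}1$ is~$1$) is exactly your ``level-$\ell$ ancestor equals the level-$\ell$ subtree in the prefix decomposition'' spelled out in bit arithmetic. The only cosmetic differences are that you rank per column (values in $[n]$) whereas the paper ranks globally over all of $M$ (values in $[n^2]$), and you phrase the construction in segment-tree language rather than bit language; neither affects correctness or the $O(T(n)\log n)$ bound.
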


\begin{proof}
The proof follows a folklore argument, see, e.g., \cite{LabibUW19}. It uses the fact that, for any two non-negative integers $a$ and $b$, it holds that $a < b$ if and only if there exists $\ell \geqslant 0$ such that
\begin{itemize}
\item the $\ell$-th least significant bit of $a$ is $0$; and
\item the $\ell$-th least significant bit of $b$ is $1$; and
\item $a$ agrees with $b$ on all bits higher than the $\ell$-th least significant, i.e., $\lfloor a/2^{\ell + 1} \rfloor = \lfloor b/2^{\ell+1} \rfloor$.
\end{itemize}
Moreover, without loss of generality, all the input numbers are integers between $0$ and $n^2-1$. Indeed, in the preprocessing, each entry of $M$ can be replaced by its rank in the sorted order of all entries of $M$; then, during a query, each entry of $v$ can be replaced by the rank of the smallest entry of $M$ greater than or equal to it. Last but not least, $M[i,k] \leqslant v[k]$ if and only if $M[i,k] < v[k]+1$, because the input numbers are integers. Hence, the algorithm sets, for $\ell = 0, 1, \ldots, \lceil \log (n^2) \rceil$,
\begin{align*}
M^{(\ell)}[i,k] &:=
\begin{cases} 
  \left\lfloor \frac{M[i,k]}{2^{\ell+1}} \right\rfloor,
  & \text{if the $\ell$-th least significant bit of $M[i,k]$ is $0$} \\
  -1, & \text{otherwise},
\end{cases}\\[.5em]
v^{(\ell)}[k] &:=
\begin{cases} 
  \left\lfloor \frac{v[k]+1}{2^{\ell+1}} \right\rfloor,
  & \text{if the $\ell$-th least significant bit of $v[k]+1$ is $1$} \\
  -2, & \text{otherwise},
\end{cases}
\end{align*}
and uses the fact that $(M \ole v)[i] = 1$ if and only if $\exists_\ell \ (M^{(\ell)} \oeq v^{(\ell)})[i] = 1$.
\end{proof}

\subsection{Reduction from Min-Witness-OMv to Min-Max-OMv}

\begin{observation}\label{obs:wit}
If Min-Max-OMv can be solved in time $T(n)$, then Min-Witness-OMv can be solved in time $T(n) + O(n^2)$.
\end{observation}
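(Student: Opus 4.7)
The plan is to encode each Boolean entry as an integer position so that the min-max product reads off exactly the minimum witness. Concretely, given the Min-Witness-OMv input matrix $M \in \{0,1\}^{n \times n}$, build an integer matrix $M' \in \mathbb{Z}^{n \times n}$ by setting $M'[i,k] \defeq k$ if $M[i,k] = 1$ and $M'[i,k] \defeq n+1$ otherwise; analogously, transform each Boolean query $v$ into the integer vector $v'$ with $v'[k] \defeq k$ if $v[k] = 1$ and $v'[k] \defeq n+1$ otherwise. Interpret the output entry $n+1$ as $\infty$.

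The first step is to verify correctness: for any $i,k$, we have $\max\{M'[i,k], v'[k]\} = k$ if $M[i,k] = v[k] = 1$, and $\max\{M'[i,k], v'[k]\} = n+1$ otherwise. Therefore
\[\min_{k \in [n]} \max\{M'[i,k], v'[k]\} = \min \bigl(\{k \mid M[i,k]=1 \land v[k]=1\} \cup \{n+1\}\bigr),\]
which matches $(M \ow v)[i]$ once we identify $n+1$ with $\infty$.

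The second step is the reduction itself: during preprocessing, build $M'$ in $O(n^2)$ time and feed it to the Min-Max-OMv data structure. Upon receiving the $j$-th query vector $v_j$, spend $O(n)$ time to build $v_j'$, ask the Min-Max-OMv oracle for $M' \omm v_j'$ in total amortized time consistent with the overall $T(n)$ bound, and relabel entries equal to $n+1$ as $\infty$. The total overhead across all $n$ queries is $O(n^2)$ for the transformations plus $T(n)$ for the oracle calls, giving the claimed $T(n) + O(n^2)$ bound.

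There is no real obstacle here; the only thing to be slightly careful about is that the encoding must strictly separate ``witness'' from ``non-witness'' contributions, which is why we use the indices $k$ themselves (so that taking a max preserves the witness index) rather than a $0/1$ marker. Everything else is mechanical.
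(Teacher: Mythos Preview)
Your proposal is correct and is essentially identical to the paper's proof: the paper sets $M'[i,k]=k$ (resp.\ $v'[k]=k$) when the Boolean entry is $1$ and $\infty$ otherwise, which is the same encoding as yours with $n+1$ playing the role of $\infty$.
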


\begin{proof}
The proof follows another folklore argument, see, e.g., \cite{Lincoln0W20}. The algorithm sets
\[M'[i,k] := \begin{cases}k,&\text{if }M[i,k]=1\\\infty,&\text{otherwise},\end{cases} \quad \text{and} \quad v'[k] := \begin{cases}k,&\text{if }v[k]=1\\\infty,&\text{otherwise},\end{cases}\]
and uses the fact $M \ow v = M' \omm v'$.
\end{proof}

\subsection{Reduction from Boolean-OMv to Bounded Monotone Min-Plus-OMv}

\begin{observation}\label{obs:bool}
If Bounded Monotone Min-Plus-OMv can be solved in time $T(n)$, then Boolean-OMv can be solved in time $T(n) + O(n^2)$.
\end{observation}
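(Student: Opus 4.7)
My plan is to exploit the first of the four monotonicity conditions in Definition~\ref{def:minplus}---the one requiring rows of $M$ to be nondecreasing---by encoding each Boolean row of the input matrix $M$ as a monotone integer sequence of ``next-one'' pointers. For each row $i \in [n]$, I would set $f(i,k) \defeq \min\{j \geq k \mid M[i,j]=1\}$, with the sentinel $n+1$ if no such $j$ exists. Because shrinking the set $\{j \geq k\}$ can only increase the minimum, $f(i,\cdot)$ is automatically nondecreasing in $k$, giving us the monotone row structure we need for free.

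Concretely, I would run the hypothesized Bounded Monotone Min-Plus-OMv algorithm on an instance of size $N \defeq n+2$, defining the matrix $M' \in [N]^{N \times N}$ by $M'[i,k] \defeq f(i,k)+1$ for $i,k \in [n]$ and $M'[i,k] \defeq N$ for all other entries. All $f(i,k)$'s can be computed in $O(n^2)$ total time by scanning each row of $M$ right-to-left, and the rows of $M'$ remain nondecreasing because $f(i,n)+1 \leq n+2 = N$, so the padding with $N$'s extends monotonicity. For each Boolean query $v \in \{0,1\}^n$, I would construct $v' \in [N]^N$ by setting $v'[k] \defeq N-k$ when $k \in [n]$ and $v[k]=1$, and $v'[k] \defeq N$ otherwise, and forward $v'$ to the BMMP-OMv data structure. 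A short case analysis of $M'[i,k]+v'[k]$ will show that a witness $k$ (one with $M[i,k]=v[k]=1$, so that $f(i,k)=k$) contributes exactly $(k+1)+(N-k) = N+1$, while every non-witness contributes at least $N+2$---either $v[k]=0$ forces $v'[k]=N$, or $M[i,k]=0$ forces $f(i,k) \geq k+1$. Thus $(Mv)[i]=1$ if and only if $(M' \oplus v')[i] = N+1$, and the Boolean product can be read off in $O(n)$ time per query.

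The most delicate step to verify will be the calibration of the shifts ``$+1$'' on $M'$ and ``$N-k$'' on $v'$: the witness sum must be uniform in $k$ and strictly below every non-witness sum, without false positives from boundary indices (such as $k=1$, where $f(i,1)$ can equal $1$) or from rows of $M$ containing few or no $1$'s (where $f$ saturates at $n+1$). The small $O(1)$ expansion of the dimension and entry range from $n$ to $N=n+2$ is precisely what provides the headroom to maintain this unit gap robustly across all such corner cases. This expansion is harmless asymptotically: for any polynomially-bounded $T$, $T(N)+O(n^2) = T(n)+O(n^2)$, matching the statement.
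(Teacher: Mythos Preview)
Your proof is correct and takes a genuinely different route from the paper's. The paper uses a one-line local linear encoding, setting $M'[i,k] \defeq 2(i+k) - M[i,k]$ and $v_j'[k] \defeq 2(j-k) - v_j[k]$, and reading off $(Mv_j)[i]=1$ iff $(M'\oplus v_j')[i] = 2(i+j)-2$; monotonicity comes for free from the dominant $2(i+k)$ drift. Your encoding instead replaces each row by its sequence of ``next-one'' pointers $f(i,k)$, which is inherently monotone without any drift term, and then carefully calibrates shifts so that a witness gives the unique minimum $N+1$. Both arguments work; the paper's is shorter and, as the paper remarks, yields monotonicity in several directions at once, whereas yours gives only row-monotone $M'$ but has the small advantage that all entries genuinely lie in $[N]$ (the paper's $M'$ ranges up to roughly $4n$ and its $v_j'$ can be negative, which strictly speaking requires a rescaling it does not spell out). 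The only loose end in your write-up is the final sentence: $T(N)+O(n^2) = T(n)+O(n^2)$ does not follow from ``polynomially bounded'' alone, but it does follow from the harmless assumption $T(n)=O(n^3)$, which holds since the naive algorithm is cubic.
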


\begin{proof}
The algorithm sets
\[M'[i,k] := 2 \cdot (i + k) - M[i,k], \quad \text{and} \quad v_j'[k] := 2 \cdot (j - k) - v_j[k],\]
and uses the fact that $(Mv_j)[i] = 1$ if and only if $(M' \oplus v_j')[i] = 2 \cdot (i + j) - 2$.
\end{proof}
We remark that the above reduction produces Min-Plus-OMv instances that are monotone in all four directions simultaneously, while our Bounded Monotone Min-Plus-OMv algorithm of Theorem~\ref{thm:minplus} works already for instances with monotonicity in one (arbitrarily chosen) direction.

\section*{Acknowledgments}

We thank anonymous reviewers for pointing us to the literature on the cell-probe model and for many useful hints on improving the manuscript.

\bibliographystyle{alphaurl}
\bibliography{main}

\end{document}